\documentclass[letterpaper, 10 pt, conference]{ieeeconf}  
\usepackage{comment}
\usepackage[backend=biber,style=ieee,maxbibnames=6]{biblatex}
\usepackage{amsmath,amssymb,amsfonts}
\usepackage{tabularx}
\PassOptionsToPackage{dvipsnames,svgnames,table}{xcolor}
\usepackage{multirow}
\usepackage{graphicx}
\usepackage{textcomp}
\usepackage{xcolor}
\usepackage{array}
\usepackage{booktabs}
\usepackage{subcaption}
\usepackage{graphicx}
\usepackage{placeins}        
\usepackage{stfloats}        

\usepackage{algorithm}
\usepackage{algpseudocode}
\usepackage{hyperref}

\usepackage{tikz}
\usetikzlibrary{arrows.meta,positioning,calc}

\usepackage{tabularx}
\usepackage{booktabs}
\newtheorem{remark}{Remark}
\newtheorem{theorem}{Theorem}

\usepackage{capt-of}

\newcolumntype{P}[1]{>{\centering\arraybackslash}p{#1}}
\def\BibTeX{{\rm B\kern-.05em{\sc i\kern-.025em b}\kern-.08em
    T\kern-.1667em\lower.7ex\hbox{E}\kern-.125emX}}

\begin{document}

\title{\LARGE \bf
Model-Free Budgeted Attack Scheduling for Cyber-Physical Systems}

\author{%
Qazi~Mairaj~ud~din$^{1}$,
Sidra~Ghayour~Bhatti$^{2}$,
and Qadeer~Ahmed$^{3}$%
\thanks{$^{1}$Qazi Mairaj ud din is with the Department of Electrical and Computer Engineering and the Center for Automotive Research, The Ohio State University, Columbus, OH 43212, USA.
{\tt\small mairajuddin.1@osu.edu}}%
\thanks{$^{2}$Sidra Ghayour Bhatti is with the Center for Automotive Research, The Ohio State University, Columbus, OH 43212, USA.
{\tt\small bhatti.39@osu.edu}}%
\thanks{$^{3}$Qadeer Ahmed is with the Department of Mechanical and Aerospace Engineering and the Center for Automotive Research, The Ohio State University, Columbus, OH 43212, USA.
{\tt\small ahmed.358@osu.edu}}%
}

\maketitle
\thispagestyle{empty}
\pagestyle{empty}

\begin{abstract}
This letter studies the budgeted scheduling of stealthy false
data-injection (FDI) attacks against state estimators in cyber–physical systems. Existing event-based attack
schedulers requires full knowledge of the plant model and assume the residual distribution is exactly Gaussian---assumptions that fail for real-world CPS sensor streams whose residuals are heavy-tailed and whose dynamics are unknown to the adversary. We propose a model-free attack-scheduler that replaces the parametric Gaussian threshold with the empirical 
quantile of a learned sequence autoencoder residual, calibrated from measurements alone without any plant matrices. We prove that the realized attack rate converges almost surely to the target budget 
under stationary ergodic residuals. Experiments on two synthetic systems and a real heavy-duty truck dataset show that the proposed scheduler tracks the budget to within $1$--$2\%$ while also preserving the residual magnitude, guaranteeing stealthiness against any residual-based detector. Comparing with the model-based baseline---granted the true plant and innovation covariance---mis-realizes the budget by up to $8.96\%$ under
heavy-tailed residual distribution causing the attacker to achieve only
$1.37\times$ system degradation when $1.84\times$ is intended.
\end{abstract}


\section{Introduction}
{\bfseries C}yber--physical systems (CPSs)
are vulnerable to cyber attacks, which are broadly categorized as denial-of-service (DoS) attacks and deception attacks \cite{zhou2024cybersecurity}. Among deception attacks, stealthy false data-injection (FDI) attacks are particularly destructive as they evade residual-based anomaly detectors while degrading system performance. Analyzing the worst-case FDI attack is therefore essential to the design of effective countermeasures.

A practical adversary operates under resource constraints and cannot attack at every instant. More formally, in practical adversarial setting, constraint of a \emph{resource budget}, where the communication-bandwidth and energy limitations are considered to design \emph{attack rate budget} $\bar{\Gamma}\in(0,1)$: the adversary may transmit an attack signal on at most a fraction $\bar{\Gamma}$ of the time steps over the deployment horizon. This motivates \emph{event-based} scheduling, in which the action is taken only after an informative event occurs. Such events carry useful information about the system and event-triggered policies have been shown to effectively trade communication rate against estimation quality \cite{wu2012event,han2015stochastic}. This principle was first developed for benign sensor data scheduling, where Wu~\emph{et al.}~\cite{wu2012event} proposed scheduling whenever the Mahalanobis-normalized innovation exceeds a threshold. The same idea was later turned to adversarial use: energy-constrained DoS attack scheduling determines when to jam the channel to maximize the estimation error \cite{zhang2015optimal, zhang2016optimal}. DoS attacks, however, only block transmissions and are detectable,
whereas stealthy FDI attacks corrupt the data while remaining
hidden \cite{zhou2024cybersecurity}. Recent works, \cite{cheng2019event, Guo2023, Wang2025}, accordingly schedule FDI attacks under joint stealthiness and budget constraints: Guo \emph{et al.} \cite{Guo2023} introduce an event-based optimal stealthy FDI scheme that triggers attack based on the residual information (event-based) and maximizes the trace of the estimation error covariance, and Wang \emph{et al.} \cite{Wang2025} extend event-based FDI
scheduling to resource-constrained multi-sensor systems. More generally, the event-based attack scheduling rule whitens the true Kalman innovation $z_k$ using the steady-state innovation covariance $S$ and computes the triggering threshold by inverting the Gaussian tail probability to satisfy the prescribed attack budget. Equivalently, they assume the attacker has full knowledge of the plant model $(A,C,Q,R)$ and that the innovation is exactly Gaussian.

These two assumptions are not realistic in practice. Real-world CPS applications like vehicle sensor streams---wheel speed, engine speed---produce innovation residuals that are markedly heavy-tailed arising from
nonlinearities, asymmetric operational constraints, 
\& intermittent events \cite{zhao2021event,Guo2026adaptive}, rather than Gaussian. Furthermore, an external adversary may only have access to the sensor stream and does not possess the plant knowledge needed to compute $S$. To the best of our knowledge, no existing attack scheduler realizes a target stealthy-attack budget without a plant model and without a
Gaussianity assumption.This letter addresses these gaps with contributions as follows:
\begin{enumerate}
\item We propose a \emph{model-free}, non-parametric attack-scheduler that fires when a learned residual from a pretrained sequence autoencoder exceeds the empirical $(1-\bar\Gamma)$ quantile of a calibration stream, requiring no plant matrices and no distributional assumption on the residual (Section~\ref{sec:trigger}).
\item We prove that the realized attack rate converges almost
surely to the target budget $\bar\Gamma$ under stationary ergodic
residuals, with calibration error decaying as $O(N^{-1/2})$ in the calibration window $N$ (Theorem~\ref{thm:rate}).
\item We show that the resulting schedule is stealthy by
construction: the attack injection preserves the residual
magnitude, leaving any 
residual-based detector's false-alarm rate
unchanged at every budget without requiring Gaussianity (Section~\ref{sec:stealthiness}).
\item We validate the scheduler on two synthetic systems (Gaussian and heavy-tailed non-Gaussian noise) and a real heavy-duty truck dataset \cite{biggs2024modeling}: the proposed trigger
achieves $1$--$2\%$ budget tracking across
$\bar\Gamma\in[0.02,0.5]$; comparing with the model-based baseline
which mis-realizes the budget by up to $8.96\%$ under heavy-tailed
noise---a budget-dependent error that reduces attack
effectiveness from the intended $1.84\times$ to only
$1.37\times$ system degradation at $\bar\Gamma=0.30$
(Section~\ref{sec:experiments}).
\end{enumerate}

\emph{Notation:} $\mathbb{E}[\cdot]$ denotes expectation and
$\mathrm{tr}(\cdot)$ the trace. For $x\in\mathbb{R}^n$, $\|x\|$ is
the Euclidean norm. $\mathcal{N}(\mu,\Sigma)$ denotes the Gaussian
distribution with mean $\mu$ and covariance $\Sigma$.
$\mathbf{1}\{\cdot\}$ is the indicator function, and
$\hat{F}_N$ the empirical CDF from $N$ samples.
$I_n$ is the $n\times n$ identity matrix.



\section{Problem Setup}
\label{sec:setup}
Consider a discrete-time linear plant
\begin{equation}
x_{k+1} = A x_k + w_k, \qquad y_k = C x_k + v_k,
\label{eq:plant}
\end{equation}
with state $x_k\!\in\!\mathbb{R}^n$ and measurement
$y_k\!\in\!\mathbb{R}^p$. The process noise $w_k\sim\mathcal{N}(0,Q)$
and sensor noise $v_k$ are mutually independent; $v_k$ is
zero-mean with covariance $R$ but is \emph{not} required to be
Gaussian (see Remark~\ref{rem:nongauss}). A steady-state Kalman
filter produces the prior estimate $\hat{x}_{k|k-1}=A\hat{x}_{k-1|k-1}$  and the corresponding measurement prediction $\hat{y}_{k\mid k-1}$. The innovation
\begin{equation}
z_k = y_k - C\hat{x}_{k|k-1},
\label{eq:innov}
\end{equation}
is the standard residual signal monitored by anomaly detectors with steady-state covariance $S = C\bar{P}C^{\!\top}+R$,  where $\bar{P}$ is the steady-state prior error covariance obtained from the discrete algebraic Riccati equation. Under Gaussian $v_k$, $z_k\sim\mathcal{N}(0,S)$; under non-Gaussian $v_k$ the covariance $S$ is still well-defined but the distribution of $z_k$ departs from Gaussian.

\begin{remark}\label{rem:nongauss}
The Kalman filter gain and $S$ depend on $Q$ and $R$ only
through their second moments and remain well-defined regardless
of the distribution of $v_k$. The proposed attack scheduler
makes no distributional assumption on $v_k$ or $z_k$.
\end{remark}

\subsection*{Residual-based detection}
A standard $\chi^2$ detector raises an alarm at step $k$ when
\begin{equation}
g_k = z_k^{\!\top}S^{-1}z_k > \eta,
\label{eq:detector}
\end{equation}
where the threshold $\eta$ is set to achieve a target false-alarm
rate $\epsilon_d$ (typically $1\%$) on nominal data
\cite{Guo2023}.

\subsection*{Threat model}
The adversary operates in a \emph{measurement-only} setting:
it observes a nominal stream $\{y_k\}_{k=1}^{N}$ during an
eavesdropping phase and can inject signals into the sensor-to-estimator
network channel during the attack window. It has no access to $A,C,Q,R$,
observer internals (gains or covariances), or detector parameters.
For a fair comparison with the model-based baseline
\cite{cheng2019event,Guo2023,Wang2025}, we adopt the worst-case sign-flip
attack from \cite[Thm.~3]{Guo2023} as the action at firing
instants:
\begin{equation}
y_k^c =
\begin{cases}
2\,\hat{y}_{k|k-1} - y_k, & \gamma_k = 1,\\[2pt]
y_k, & \gamma_k = 0,
\end{cases}
\label{eq:signflip}
\end{equation}
where $\gamma_k\!\in\!\{0,1\}$ is the attack trigger.
Substituting \eqref{eq:signflip} into \eqref{eq:innov} gives the compromised innovation $z_k^{c}=y_k^{c}-\hat{y}_{k\mid k-1}=-z_k$, which is the sign-flipped nominal residual and preserves the second-order
statistics of $z_k$ exactly, therefore evading any
residue-based detector \cite{greenwood1996guide}.
Since $\hat{y}_{k|k-1}$ is unavailable in the model-free
setting, we replace it with a learned predictor introduced in
Section~\ref{sec:trigger}.

\subsection*{Budgeted attack scheduling problem}
Under a resource budget $\bar\Gamma\!\in\!(0,1)$, the adversary
requires the long-run attack rate to satisfy
\begin{equation}
\lim_{T\to\infty}\frac{1}{T}\sum_{k=1}^{T}\gamma_k
\;\leq\; \bar\Gamma.
\label{eq:budget}
\end{equation}
where $T$ is the deployment horizon. The scheduling problem is to design a trigger $\gamma_k$ that is a measurable function of the observable stream $\{y_j\}_{j\leq k}$ such that \eqref{eq:budget} holds,
and the attacks concentrate on instants of large nominal residual (events) to maximize estimation degradation while remaining detector-stealthy, and also requiring neither $A,C,Q,R$ nor Gaussianity of the innovation.

\section{Budgeted Attack Scheduling Framework}
\label{sec:baseline}
The model-based attack schedulers proposed in \cite{cheng2019event,Guo2023,Wang2025} adopt an event-triggered scheduler which first whitens the innovation by its steady-state covariance, 
\begin{equation}
\zeta_k \;=\; S^{-1/2}\,z_k,
\label{eq:whiten}
\end{equation}
Here $S^{-1/2}$ is the symmetric inverse square root of the
innovation covariance, so that $\zeta_k$ has identity
covariance whenever $S$ matches the true innovation
covariance. The scheduler triggers attacks when its sup-norm exceeds a threshold $\varepsilon$:
\begin{equation}
\gamma_k \;=\;
\begin{cases}
1 & \text{if } \|\zeta_k\|_\infty > \varepsilon,\\
0 & \text{otherwise}
\end{cases}
\quad\text{\cite[Eq.~(16)]{Guo2023}}.
\label{eq:guo-trigger}
\end{equation}
Under the additional assumption that $\zeta_k\!\sim\!\mathcal{N}(0,I_m)$ has independent standard normal entries, \cite[Thm.~1]{Guo2023} establishes the closed-form budget-to-threshold map
\begin{equation}
\Gamma \;=\; 1-\bigl(1-2Q(\varepsilon)\bigr)^{m},
\label{eq:gauss-rate}
\end{equation}
where $Q(\cdot)$ is the Gaussian tail probability and $m=\dim(z_k)$ is the dimension of the (vector-valued) residual
on the attacked channel. Equation~\eqref{eq:gauss-rate} follows from the
fact that, under Gaussianity, the probability that all $m$
entries of $\zeta_k$ satisfy $|\zeta_k^{(j)}|\leq\varepsilon$
is $(1-2Q(\varepsilon))^m$. Inverting~\eqref{eq:gauss-rate}
gives the threshold for a prescribed budget $\bar\Gamma$:
\begin{equation}
\varepsilon^{\dagger}
\;=\; Q^{-1}\!\Bigl(\tfrac{1}{2}
\bigl[1-(1-\bar\Gamma)^{1/m}\bigr]\Bigr).
\label{eq:gauss-thresh}
\end{equation}

\subsection*{Limitations of the model-based framework}
Implementing \eqref{eq:whiten}--\eqref{eq:gauss-thresh}
requires two assumptions that are unrealistic in adversarial
settings. First, computing $S^{-1/2}$ requires the plant
matrices $(A,C,Q,R)$ via the discrete algebraic Riccati
equation---knowledge unavailable to a realistic measurement-only
adversary. Second, the budget-rate map \eqref{eq:gauss-rate}
requires $\zeta_k$ to be exactly standard normal with
independent entries. As noted by~\cite{han2015stochastic}, the deterministic threshold \eqref{eq:guo-trigger} itself distorts the conditional distribution of the innovation, so the Gaussian assumption is self-contradictory at the operating point. Beyond this structural issue, real CPS residuals routinely depart from Gaussianity: plant
nonlinearities introduce higher-order moments into the
innovation; asymmetric operational constraints such as actuator
saturation truncate one tail of the residual distribution;
intermittent events such as mode switches and scheduled
reconfigurations produce transient deviations; and sensor
outliers, quantization errors, and packet drops generate
occasional large innovations~\cite{han2015stochastic,karlgaard2008adaptive,
pearson2002outliers}. The next section proposes a
non-parametric attack-scheduler that requires neither plant knowledge nor
Gaussianity.


\section{Proposed Model-Free Trigger}
\label{sec:trigger}
We propose a model-free approach where we replace both ingredients of \eqref{eq:whiten} the whitened innovation \& \eqref{eq:gauss-thresh} the
Gaussian-tail inverse by a learned manifold residual score using historic measurements and derived empirical quantile.

\subsection{Learned predictor and manifold residual score}
Nominal measurements of the LTI system \eqref{eq:plant} lie
near the \emph{measurement manifold}
$\mathcal{M}=\operatorname{col}(C)\subset\mathbb{R}^p$: in the
noiseless limit, $y_k=Cx_k\in\mathcal{M}$, so the measurement
covariance $\Sigma_y = C\Sigma_x C^{\!\top}+R$ has a signal
subspace equal to $\operatorname{col}(C)$ \cite{viberg1995subspace}.
Subspace identification methods exploit precisely this
structure: by estimating $\Sigma_y$ from data, the column
space of $C$ can be recovered without knowledge of $A,C,Q,R$
\cite{viberg1995subspace,kim2014subspace}. We adopt the same philosophy
for the adversarial setting.

Let $\mathrm{AE}^{\star}$ be a sequence autoencoder with a
bottleneck of dimension $q\le\operatorname{rank}(C\Sigma_x
C^{\!\top})$ 
, trained on nominal data by
minimizing the reconstruction MSE
\begin{equation}
\mathcal{L}(\mathrm{AE}) \;=\;
\mathbb{E}\bigl[\bigl\|y_k - \mathrm{AE}(y_k)\bigr\|^2\bigr].
\label{eq:ae-loss}
\end{equation}
The theoretical role of $\mathrm{AE}^{\star}$ follows from two
complementary results. For a \emph{linear} autoencoder,
Baldi et al. \cite{BaldiHornik1989} proved that
$\mathcal{L}$ has a unique global minimum (up to rotation in
the latent space) with all other critical points being saddle
points; at this minimum,
\begin{equation}
\mathrm{AE}^{\star}(y) \;=\; P_U\,y,
\label{eq:linear-ae}
\end{equation}
where $P_U$ is the orthogonal projection onto the subspace $U$
spanned by the top-$q$ eigenvectors of
$\mathbb{E}[y_k y_k^{\!\top}]$. In the small-noise regime of
\eqref{eq:plant}, the signal subspace $U$ converges to $\operatorname{col}(C)$, so $s_k = \|y_k - P_U y_k\|_2$ is exactly
the distance from $y_k$ to the measurement manifold
$\mathcal{M}$.

For a \emph{nonlinear} (deep or regularized) autoencoder,
Alain et al.\cite{alain2014regularized} showed that
minimizing a contractive or denoising form of
\eqref{eq:ae-loss} forces $\mathrm{AE}^{\star}$ to be
insensitive to variations \emph{orthogonal} to the local
manifold while remaining sensitive to variations \emph{along}
it. More precisely, the reconstruction function
$r(y)=\mathrm{AE}^{\star}(y)$ estimates the score
(gradient of the log-density) of the nominal data
distribution, with the Jacobian $\partial r/\partial y$ having
small singular values in directions off the manifold and
order-one singular values along manifold tangents. As a consequence, the
reconstruction error $r(y)-y$ points toward the nearest
high-density region on the data manifold, so that
$s_k = \|y_k - \mathrm{AE}^{\star}(y_k)\|_2$ is small
near the nominal measurement manifold $\mathcal{M}$ and
large whenever $y_k$ deviates from $\mathcal{M}$,
regardless of the specific noise distribution. The
identification $\mathrm{AE}^{\star}\approx P_U$ holds locally
under Lipschitz continuity of the encoder--decoder pair.

This theoretical grounding motivates the use of $s_k$ as a
plant-agnostic anomaly score: high values of $s_k$ signal that
$y_k$ has deviated from the nominal measurement manifold, which
is exactly the condition under which the model-based trigger
\eqref{eq:guo-trigger} also fires ($\|\zeta_k\|_\infty >
\varepsilon$). The shared firing geometry is what allows the
empirical-quantile threshold below to reproduce the budget
behavior of \eqref{eq:gauss-thresh} without requiring
$A,C,Q,R$ or Gaussianity. Furthermore, by learning the
measurement subspace from data alone the autoencoder enables the adversary to
operate under the measurement-only threat model of
Section~\ref{sec:setup} without acquiring any system
parameters.

Define the per-step manifold residual score
\begin{equation}
s_k \;=\; \bigl\| y_k - \mathrm{AE}^{\star}(y_k)\bigr\|_{2},
\label{eq:ae-score}
\end{equation}
which serves both as the scheduling score in
\eqref{eq:mf-trigger} and, after sign-flip, as the basis for
the measurement prediction
$\hat{y}_{k|k-1} \approx \mathrm{AE}^{\star}(y_k)$ required
to implement \eqref{eq:signflip} in the model-free setting.

\subsection{Empirical-quantile threshold}
Given a budget $\bar{\Gamma}$ and a calibration stream of
$N$ nominal samples, let
\begin{equation}
\widehat{F}_N(t)\;=\;\frac{1}{N}\sum_{k=1}^{N}\mathbf{1}\{s_k\le t\}
\label{eq:emp-cdf}
\end{equation}
be the empirical CDF of the score \cite{vd1998asymptotic} ---  i.e., the fraction of nominal samples whose score does not exceed $t$. The trigger threshold is the generalized inverse of $\widehat{F}_N$ at level $1-\bar{\Gamma}$,
\begin{equation}
\hat{\varepsilon}_N
\;=\;\inf\bigl\{t\in\mathbb{R}:\widehat{F}_N(t)\ge 1-\bar{\Gamma}\bigr\},
\label{eq:mf-thresh}
\end{equation}

i.e., the smallest threshold for which at least a fraction
$1-\bar{\Gamma}$ of the calibration scores fall below it. Calibrating detection or scheduling thresholds as empirical residual quantiles is a standard
non-parametric approach in statistical analysis and CPS anomaly detection
\cite{chandola2009anomaly, murguia2020security}, providing a
fixed nominal false-trigger rate without distributional
assumptions on $s_k$. By
construction, \eqref{eq:mf-thresh} sets $\hat{\varepsilon}_N$
at the $(1-\bar{\Gamma})$-quantile of the nominal score
distribution, so that a fraction $\bar{\Gamma}$ of nominal
scores exceed it. The trigger is then
\begin{equation}
\gamma_k \;=\;
\begin{cases}
1 & \text{if } s_k > \hat{\varepsilon}_N,\\
0 & \text{otherwise.}
\end{cases}
\label{eq:mf-trigger}
\end{equation}
Construction \eqref{eq:ae-score}--\eqref{eq:mf-trigger}
requires neither $A,C,Q,R$ nor any assumption on the marginal
distribution of $s_k$. It depends on the nominal stream only
through $\widehat{F}_N$, and on the budget only through the
quantile level $1-\bar{\Gamma}$. 

This non-parametric calibration replaces
\cite[Thm.~1]{Guo2023} and \cite[Thm.~1]{Wang2025}. It does
not require Gaussianity, it does not need a closed-form tail,
and it tolerates the heavy-tailed residuals. The price of non-parametric calibration is that the budget guarantee is asymptotic rather than closed-form; the model-based map \eqref{eq:gauss-rate} is exact only under
Gaussianity, which fails in practice, whereas Theorem~\ref{thm:rate} establishes almost-sure convergence of $\hat\Gamma_{N,T}\to\bar\Gamma$ under ergodicity alone and quantifies the $O(N^{-1/2})$ finite-sample gap.

\begin{remark}[Plug-in Gaussian baseline]
A natural attacker-feasible variant of the trigger scheduler in section \ref{sec:baseline} \eqref{eq:whiten}--\eqref{eq:gauss-thresh} replaces the unknown $S$ with the empirical covariance
\begin{equation}
\widetilde{S}_N \;=\; \frac{1}{N}\sum_{k=1}^{N} r_k\,r_k^{\!\top},
\qquad r_k \;=\; y_k - \mathrm{AE}^{\star}(y_k)\in\mathbb{R}^p,
\label{eq:S-tilde}
\end{equation}
computed offline from the AE residuals on the calibration
stream. 
The trigger statistic becomes
$\tilde{\zeta}_k = \widetilde{S}_N^{-1/2} r_k$, and the
threshold is set by the Gaussian-tail inversion
\eqref{eq:gauss-thresh} with $S$ replaced by $\widetilde{S}_N$:
\begin{equation}
\tilde{\gamma}_k \;=\;
\begin{cases}
1 & \text{if } \!\bigl\{\|\widetilde{S}_N^{-1/2} r_k\|_\infty>\varepsilon^{\dagger}\bigr\},\\
0 & \text{otherwise}
\end{cases}
\label{eq:plugin-trigger}
\end{equation}

This plug-in construction keeps the Gaussian-tail inverse
\eqref{eq:gauss-thresh} but only resolves the unknown-$S$
issue, not the non-Gaussianity issue; we use it as one of the
comparison baselines in Section~\ref{sec:experiments}.
\end{remark}

\section{Convergence of the Realized Rate}
\label{sec:theory}
Since with our proposed model-free approach, non-parametric construction sacrifices the closed-form
of \eqref{eq:gauss-rate}, so we must verify that the realized
attack rate still tracks the budget. The following result
establishes almost-sure convergence under conditions strictly
weaker than those required for \eqref{eq:gauss-rate}.

\begin{theorem}
\label{thm:rate}
Let $\{s_k\}_{k\ge 1}$ be a stationary, ergodic, real-valued
process with continuous marginal CDF $F$. Assume the
calibration stream (of length $N$, used to form $\hat{\varepsilon}_N$) and the
deployment stream (of length $T$, over which the rate is realized)
are generated by the same marginal law and are mutually
independent. For a target budget
$\bar{\Gamma}\in(0,1)$, define $\hat{\varepsilon}_N$ by
\eqref{eq:mf-thresh} and the realized rate over horizon $T$,
\begin{equation}
\hat{\Gamma}_{N,T}
\;=\; \frac{1}{T}\sum_{k=1}^{T}\mathbf{1}\!\bigl\{s_k>\hat{\varepsilon}_N\bigr\}.
\label{eq:realized-rate}
\end{equation}
Then $\hat{\Gamma}_{N,T}\to\bar{\Gamma}$ almost surely as
$N\to\infty$ and $T\to\infty$.
\end{theorem}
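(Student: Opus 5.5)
The plan is to split the error into a calibration part and a deployment part, and to handle each with an ergodic limit theorem plus continuity of $F$. Write $\varepsilon^\star = F^{-1}(1-\bar\Gamma) = \inf\{t : F(t)\ge 1-\bar\Gamma\}$. Because $F$ is continuous, $1-F(\varepsilon^\star) = \bar\Gamma$ exactly. Then
\begin{equation*}
\bigl|\hat\Gamma_{N,T}-\bar\Gamma\bigr|
\le \Bigl|\hat\Gamma_{N,T} - \bigl(1-F(\hat\varepsilon_N)\bigr)\Bigr|
+ \bigl|F(\hat\varepsilon_N) - F(\varepsilon^\star)\bigr| .
\end{equation*}
I will show that each term vanishes almost surely.

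\textbf{Step 1 (Glivenko--Cantelli on the calibration stream).} Ergodicity and Birkhoff's ergodic theorem give $\widehat F_N(t)\to F(t)$ almost surely for each fixed rational $t$. Since $F$ is continuous and monotone, the usual bracketing argument upgrades this to $\sup_t|\widehat F_N(t)-F(t)|\to 0$ almost surely.

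\textbf{Step 2 (the quantile term).} From the definition \eqref{eq:mf-thresh} of the generalized inverse,
\begin{equation*}
1-\bar\Gamma \le \widehat F_N(\hat\varepsilon_N) \le 1-\bar\Gamma + \sup_t \bigl|\widehat F_N(t)-\widehat F_N(t^-)\bigr| .
\end{equation*}
The jump of $\widehat F_N$ at any point is at most $\|\widehat F_N-F\|_\infty$ plus the jump of $F$, which is zero, so the right-hand side is at most $1-\bar\Gamma + 2\|\widehat F_N-F\|_\infty$. Combining with Step 1,
\begin{equation*}
\bigl|F(\hat\varepsilon_N)-(1-\bar\Gamma)\bigr| \le 3\|\widehat F_N-F\|_\infty \to 0 \quad \text{a.s.}
\end{equation*}
This bound works even when $F$ is flat at level $1-\bar\Gamma$. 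In that case $\hat\varepsilon_N$ itself need not converge, but $F(\hat\varepsilon_N)$ still does, and that is all the decomposition needs.

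\textbf{Step 3 (the deployment term).} Let $G_T(t)=\frac1T\sum_{k=1}^T \mathbf 1\{s_k>t\}$ be computed on the deployment stream. Applying Step 1 to that stream gives $\sup_t |G_T(t)-(1-F(t))|\to 0$ almost surely. Because this convergence is uniform in $t$, I can plug in the random threshold $t=\hat\varepsilon_N$ with no measurability difficulty. Independence of the two streams makes the joint event of full probability a product of the two full-probability events. Hence the first term vanishes almost surely as $N,T\to\infty$ jointly, in any manner.

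\textbf{Main obstacle.} The delicate step is Step 2, namely turning uniform CDF convergence into convergence of $F(\hat\varepsilon_N)$ when the quantile may be non-unique. Controlling $F(\hat\varepsilon_N)$ directly, rather than $\hat\varepsilon_N$, avoids this. The remaining care is in Step 1, where the ergodic Glivenko--Cantelli argument needs a countable dense set of $t$ together with continuity of $F$. The $O(N^{-1/2})$ rate mentioned in the contributions would require an extra mixing assumption, for example a DKW-type inequality for the calibration stream. I will not need that rate here.
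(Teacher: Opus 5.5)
Your proof is correct, and it takes a genuinely different route from the paper's. The paper first proves threshold consistency, $\hat\varepsilon_N\to\varepsilon^\star$, via Tucker's ergodic Glivenko--Cantelli theorem. It then sandwiches $\hat\Gamma_{N,T}$ between deployment averages at the two \emph{fixed} thresholds $\varepsilon^\star\pm\delta$, applies the pointwise ergodic theorem to those, and lets $\delta\to 0$. To obtain $\hat\varepsilon_N\to\varepsilon^\star$, the paper uses that $F$ is ``strictly increasing in a neighborhood of $\varepsilon^\star$,'' which is not among the theorem's hypotheses.

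Your route never needs the threshold itself to converge. You control $F(\hat\varepsilon_N)$ directly through the generalized-inverse inequalities. You handle the deployment term with a uniform Glivenko--Cantelli bound, into which the random threshold can be substituted pathwise. This gives the deterministic estimate $|\hat\Gamma_{N,T}-\bar\Gamma|\le \sup_t|G_T(t)-(1-F(t))|+3\|\widehat F_N-F\|_\infty$.

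What your approach buys:
\begin{itemize}
\item It proves the theorem exactly as stated, including the case where $F$ is flat at level $1-\bar\Gamma$ and $\hat\varepsilon_N$ may oscillate.
\item It decouples $N$ and $T$ cleanly, so the joint limit is immediate.
\item That same bound is the natural starting point for the $O_p(N^{-1/2})+O_p(T^{-1/2})$ remark.
\end{itemize}
The paper's route buys the stronger by-product $\hat\varepsilon_N\to\varepsilon^\star$, at the price of its extra monotonicity assumption. On the deployment stream it needs only the ergodic theorem at countably many fixed thresholds, not uniform convergence there.

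Two small points. First, a jump of $\widehat F_N$ is bounded by $2\|\widehat F_N-F\|_\infty$, not $\|\widehat F_N-F\|_\infty$; you do use the correct factor in your final bound. Second, independence of the two streams is not needed. Your bound holds pathwise on the intersection of the two Glivenko--Cantelli events, and the intersection of two probability-one events has probability one regardless of dependence.
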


\begin{proof}
We split the argument into the convergence of the calibrated
threshold and that of the deployment-time average at a
fixed threshold along with their link to each other.

\emph{Step 1 (threshold consistency).} By the generalized
Glivenko--Cantelli theorem for stationary ergodic sequences
\cite{tucker1959generalization}
,
\begin{equation}
\sup_{t\in\mathbb{R}}\bigl|\widehat{F}_N(t)-F(t)\bigr|
\;\xrightarrow[\text{a.s.}]{N\to\infty}\;0.
\label{eq:gc}
\end{equation}
Let $\varepsilon^{\star}=F^{-1}(1-\bar{\Gamma})$ and fix any
$\delta>0$. Since $F$ is continuous and strictly increasing in a
neighborhood of $\varepsilon^{\star}$, there exists $\eta>0$ with
$F(\varepsilon^{\star}+\delta)\ge 1-\bar{\Gamma}+\eta$ and
$F(\varepsilon^{\star}-\delta)\le 1-\bar{\Gamma}-\eta$. On the
a.s.\ event where the supremum in \eqref{eq:gc} is below $\eta$,
the definition \eqref{eq:mf-thresh} forces
$|\hat{\varepsilon}_N-\varepsilon^{\star}|\le\delta$. Hence
\begin{equation}
\hat{\varepsilon}_N
\;\xrightarrow[\text{a.s.}]{N\to\infty}\; \varepsilon^{\star}.
\label{eq:eps-consistency}
\end{equation}

\emph{Step 2 (population rate at the true threshold).}
The ergodic theorem applied to the bounded function
$\mathbf{1}\{s_k>\varepsilon^{\star}\}$ gives
\begin{equation}
\frac{1}{T}\sum_{k=1}^{T}\mathbf{1}\!\bigl\{s_k>\varepsilon^{\star}\bigr\}
\;\xrightarrow[\text{a.s.}]{T\to\infty}\;
1-F(\varepsilon^{\star})
\;=\; \bar{\Gamma}.
\label{eq:pop-rate}
\end{equation}

\emph{Step 3 (joint limit).} For any $\delta>0$, on the a.s.\
event $\{|\hat{\varepsilon}_N-\varepsilon^{\star}|<\delta\}$,
which by \eqref{eq:eps-consistency} holds for all sufficiently
large $N$, the inclusions
$\{s_k>\varepsilon^{\star}+\delta\}\subseteq\{s_k>\hat{\varepsilon}_N\}\subseteq\{s_k>\varepsilon^{\star}-\delta\}$
yield
\begin{equation}
\frac{1}{T}\sum_{k=1}^{T}\mathbf{1}\!\bigl\{s_k>\varepsilon^{\star}+\delta\bigr\}
\;\le\; \hat{\Gamma}_{N,T}
\;\le\;
\frac{1}{T}\sum_{k=1}^{T}\mathbf{1}\!\bigl\{s_k>\varepsilon^{\star}-\delta\bigr\}.
\label{eq:sandwich}
\end{equation}
Sending $T\to\infty$ and applying the ergodic theorem to both
bounds gives
\begin{equation}
1-F(\varepsilon^{\star}+\delta)
\;\le\; \liminf_{N,T\to\infty}\hat{\Gamma}_{N,T}
\;\le\; \limsup_{N,T\to\infty}\hat{\Gamma}_{N,T}
\;\le\; 1-F(\varepsilon^{\star}-\delta),
\label{eq:liminflimsup}
\end{equation}
almost surely. Letting $\delta\to 0$ and invoking continuity
of $F$ at $\varepsilon^{\star}$ yields
$\hat{\Gamma}_{N,T}\to\bar{\Gamma}$ a.s.
\end{proof}

\begin{remark}[Finite-sample budget error]
Theorem~\ref{thm:rate} is asymptotic. Under standard mixing conditions on $\{s_k\}$, and provided the
marginal density $f$ exists and is positive at
$\varepsilon^{\star}$, the Bahadur representation for sample
quantiles of dependent sequences \cite{vd1998asymptotic} together
with the central limit theorem for stationary mixing sequences
yields
\begin{equation}
\hat{\Gamma}_{N,T} - \bar{\Gamma}
\;=\; O_p\bigl(N^{-1/2}\bigr) + O_p\bigl(T^{-1/2}\bigr),
\label{eq:rate-clt}
\end{equation}
decomposing the finite-sample budget error into a calibration
component, controlled by the eavesdropping window $N$, and a
deployment-time stochastic component, controlled by the horizon
$T$. The empirical $O(N^{-1/2})$ decay is verified in
Section~\ref{sec:experiments} (Figure.~\ref{fig:convergence}).
\end{remark}

\begin{remark}[Comparison with the Gaussian-tail calibration]
The map \eqref{eq:gauss-rate}--\eqref{eq:gauss-thresh} 
is exact only when $\zeta_k$ is
Gaussian. If a misspecified covariance $\widetilde{S}\neq S$
is used, $\widetilde{S}^{-1/2}z_k$ is not whitened, and the
realized rate under threshold $\varepsilon^{\dagger}$ from
\eqref{eq:gauss-thresh} can deviate from $\bar{\Gamma}$ by an
amount that does not vanish as $N\to\infty$. By contrast,
Theorem~\ref{thm:rate} requires only stationarity and ergodicity
of $\{s_k\}$, and the realized rate converges to $\bar{\Gamma}$
regardless of the marginal distribution.
\end{remark}

\section{Experimental Validation}
\label{sec:experiments}
We evaluate the proposed model-free attack scheduler along four axes:
(i) realized-rate accuracy versus target budget;
(ii) convergence with the calibration window;
(iii) comparison against the model-based baseline under both gaussian and
non-gaussian distribution; and (iv) stealth preservation against a
residual-based detector. All experiments use the residual
sign-flip attack \eqref{eq:signflip} at firing instants.

Two numerical examples and a real-world vehicle CAN, \textbf{CSU truck} dataset \cite{biggs2024modeling} is used to evaluate and compare the proposed model-free attack-scheduler against model-based baseline \cite{cheng2019event,Guo2023,Wang2025}. The plant for both
examples is a fully known linear plant
\begin{equation}
A=\begin{bmatrix}0.95&0.02\\0&0.90\end{bmatrix},\quad
C=\begin{bmatrix}1&0\end{bmatrix},\quad
Q=0.01\,I_2,\quad R=0.05,
\label{eq:num-plant}
\end{equation}
stable with $\rho(A)=0.95$. This way the model-based baseline can be granted full plant knowledge (A,C,Q,R), enabling a fair head-to-head comparison. For the real-world dataset, plant matrices are unavailable so the model-based
baseline cannot be applied. This is precisely the setting that motivates the model-free approach operating from measurements alone, depicting more realistic adversarial settings. 

\textbf{Experiment~1} uses Gaussian sensor noise
$v_k\!\sim\!\mathcal{N}(0,R)$ with 4th moment, i.e. excess kurtosis $\kappa_v
\!=\!0$ (no heavy-tail). \textbf{Experiment~2} uses the
Gaussian mixture \eqref{eq:mix-noise}, that share the \emph{same}
marginal variance $R$, so that the model-based baseline
receives an identical steady-state covariance $S$ in both,
isolating distributional shape as the sole variable between
experiments. This simulates non-Gaussianity and heavy-tailed behavior of real-world CPS sensors while inducing innovation excess kurtosis $\kappa_z\!=\!7.5$ (heavy-tail). The Gaussianity assumption underlying \eqref{eq:gauss-thresh} holds exactly in the former and fails in the latter.
\begin{equation}
v_k \;\sim\; (1\!-\!p)\,\mathcal{N}(0,\sigma_1^2)
\;+\; p\,\mathcal{N}(0,\sigma_2^2),
\label{eq:mix-noise}
\end{equation}
with $p\!=\!0.05$, $\sigma_1^2\!=\!0.0227$,
$\sigma_2^2\!=\!0.5682$. 
A separate sequence autoencoder $\mathrm{AE}^{\star}$ model is trained on each dataset using a single-layer LSTM with hidden dimension $128$, latent dimension $1$, and window length $L=50$, trained by
reconstruction MSE on a $70\%$ training split; thresholds are
calibrated on the subsequent $15\%$ and all rates are evaluated on the
held-out final $15\%$. Calibration residuals for the CSU dataset are trimmed at $5\sigma$ to exclude non-stationary driving transients ($0.65\%$ of steps removed).

\subsubsection{Budget tracking}
Table~\ref{tab:budget} reports the realized rate $\hat\Gamma_{N,T}$ against the target budget $\bar\Gamma$ and their corresponding errors for all experiments. As expected, in Experiment~1 both methods track the budget to within $1\%$ across $\bar\Gamma\in[0.02,0.5]$, confirming that the proposed scheduler matches the model-based baseline when the Gaussian assumption holds. In Experiment~2, however, Figure.~\ref{fig:budget-exp2}  shows that the model-based baseline---despite having access to the true plant knowledge, correct $S$ and the true innovation---deviates from the target by up to $8.96\%$, over-firing at small budgets and under-firing at larger ones, whereas the proposed scheduler tracks the budget to a mean error of $0.24\%$. The non-vanishing gap arises because the whitened innovation $z_k/\sqrt{S}$ is not standard normal under the heavy-tailed mixture noise, so the Gaussian-tail threshold is mis-placed regardless of the estimation of $S$. Since the
plant matrices are unknown, the model-based baseline cannot be evaluated for the CSU truck dataset (hence "N/A" Table~\ref{tab:budget}). However, the proposed trigger tracks the budget on the truck dataset to a mean
error of $1.49\%$ and a maximum error of $2.18\%$ across
$\bar\Gamma\in[0.02,0.5]$, without access to any plant
information. This accuracy is comparable to the model-based baseline on Experiment~2 (Table~\ref{tab:budget}), where the Gaussian assumption fails,
demonstrating that the model-free scheduler generalises from the
controlled synthetic setting to a real non-Gaussian vehicle data.

\begin{figure}[t]
\centering
\includegraphics[width=0.8\linewidth]{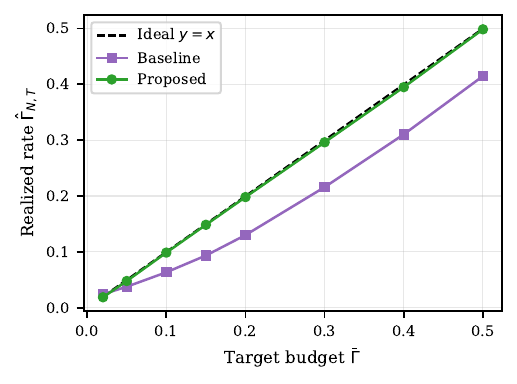}
\caption{Realized attack rate $\hat\Gamma_{N,T}$ versus target
budget $\bar\Gamma$ for Experiment~2 (non-Gaussian).Experiment~1
(Gaussian) and CSU truck results are reported in Table~\ref{tab:budget}.}
\label{fig:budget-exp2}
\end{figure}

\begin{table}[t]
\centering
\caption{$\hat{\Gamma}_{N,T}$ error across all datasets.}
\label{tab:budget}
\setlength{\tabcolsep}{5pt}
\begin{tabular}{lcccc}
\toprule
Dataset & \multicolumn{2}{c}{Mean $|$err$|$} & \multicolumn{2}{c}{Max $|$err$|$}\\
\cmidrule(lr){2-3}\cmidrule(lr){4-5}
 & Baseline & Proposed & Baseline & Proposed\\
\midrule
Exp.\ 1 (Gaussian) & $0.23\%$ & $0.54\%$ & $0.5\%$ & $1.0\%$\\
Exp.\ 2 (non-Gaussian)  & $5.48\%$ & $0.24\%$ & $8.96\%$ & $0.51\%$\\
CSU Truck (Real)   & N/A & $1.48\%$ & N/A & $2.18\%$\\
\bottomrule
\end{tabular}
\end{table}

\subsubsection{Convergence with the calibration window}
Figure.~\ref{fig:convergence} plots the budget error
$|\hat\Gamma_{N,T}-\bar\Gamma|$ against the calibration window $N$ at
$\bar\Gamma=0.1$ for experiment 2. The error decays from
$1.63\%$ at $N=200$ to $0.52\%$ at $N=17{,}350$, following the
$O(N^{-1/2})$ reference, empirically corroborating
Theorem~\ref{thm:rate}.

\begin{figure}[t]
\centering
\includegraphics[width=0.8\linewidth]{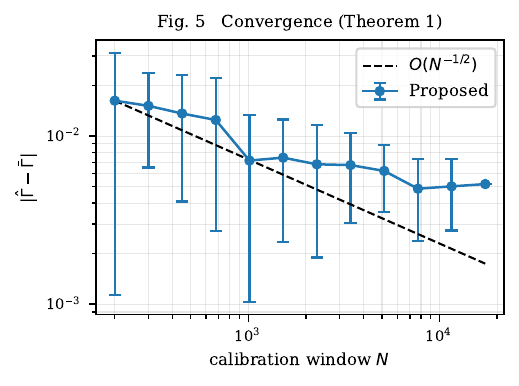}
\caption{Budget error $|\hat\Gamma_{N,T}-\bar\Gamma|$ versus
calibration window $N$}
\label{fig:convergence}
\end{figure}

\subsubsection{Performance degradation}\label{sec:degradation}
We evaluate the impact of the attack \eqref{eq:signflip} on the estimator via the trace of the state-estimation error covariance (which measures how much the estimator performance is degraded) via $\mathrm{tr}(P^a_k)\approx\mathbb{E}[\|e^a_k\|^2]$, estimated by Monte-Carlo averaging over $50$ independent noise realizations. The attack is activated after an initial $3{,}200$-step nominal window, allowing the filter to reach steady state before attack onset. The nominal (no-attack) trace is $\mathrm{tr}(P)=0.069$ for both experiments. In both experiments, the trace rises monotonically after attack onset over the deployment horizon for the attack rate $\bar\Gamma\in\{0,0.10,0.30,0.50\}$, confirming that more attack resources cause greater performance degradation. In Experiment~1 both schedulers produce nearly identical degradation at each budget, since both realize the target rate accurately on Gaussian data. However, in Experiment~2 (Figure.~\ref{fig:trpa}), the baseline under-fires on non-Gaussian data and achieves less degradation (1.358x at $\bar\Gamma\in\{0.30\}$) than intended, translating directly into a loss of attack effectiveness. The proposed trigger fires at the correct rate
and delivers the intended impact (1.835x at $\bar\Gamma\in\{0.30\}$) at every budget. 
The performance-degradation cannot be evaluated on the CSU truck
dataset since it requires the true state $x_k$ to compute
$\mathrm{tr}(P^a_k) = \mathbb{E}[\|x_k - \hat{x}^a_k\|^2]$,
which is not available without a plant model. However, the AE-residual excess kurtosis of the truck data ($\kappa_r=5.33$) closely matches Experiment~2, confirming that the Experiment~2 is a faithful statistical proxy for the heavy-tailed residuals observed in real vehicle data. 
\begin{figure}[t]
\centering
\includegraphics[width=1\linewidth]{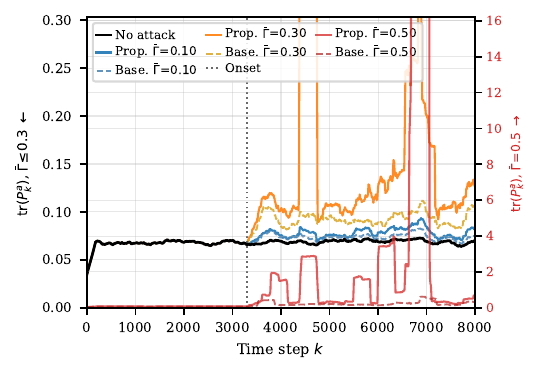}
\caption{$\mathrm{tr}(P^a_k)$ under attack for Experiment~2
(non-Gaussian) at $\bar\Gamma\in\{0,0.10,0.30,0.50\}$,
Left axis ($\bar\Gamma\!=\!0, 0.10, 0.30$); right axis ($\bar\Gamma\!=\!0.50$,
red). Experiment~1 results (identical degradation for both schedulers)
are omitted for brevity.}
\label{fig:trpa}
\end{figure}

\subsubsection{Attack stealthiness}\label{sec:stealthiness}
The sign-flip \eqref{eq:signflip} preserves $|z_k|$ exactly,
so the windowed $\chi^2$ detection index (window $N\!=\!20$ and
threshold $\vartheta$ at $1\%$ nominal false-alarm rate)
$g_t=\sum_{k=t-N+1}^{t}z_k^{\!\top}S^{-1}z_k$ is
identically distributed under attack and no attack. This is
confirmed empirically: the attacked false-alarm rate equals
the nominal baseline of $1.08\%$ (Experiment~1) and $0.69\%$
(Experiment~2) at every budget
$\bar\Gamma\in\{0.10,0.30,0.50\}$. For the CSU truck dataset,
stealthiness follows analytically from the identity
$|z^c_k|\!=\!|z_k|$, which holds regardless of noise
distribution; empirical verification is not possible since
$z_k$ and $S$ require plant knowledge unavailable for real
CAN data.

\section{Conclusion}
We have proposed a non-parametric, plant-agnostic
attack-scheduling trigger for resource-budgeted stealthy
attacks on cyber-physical systems. The trigger replaces model-based dependencies to whiten-innovation sup-norm, with
the residual of a pretrained sequence autoencoder, and
replaces the Gaussian-tail closed form with an empirical
quantile of the residual on a nominal stream. We have shown
that the realized attack rate converges almost surely to the
target budget under stationary ergodicity, without any
Gaussianity assumption and without any knowledge of the plant
matrices, and our experimental design is structured to verify
both the asymptotic and finite-sample properties of this
guarantee on real-world vehicle data. Combining the proposed
trigger with the manifold-consistent injection and budgeted
sensor-selection components of our broader framework is the
subject of future work.


\renewcommand*{\bibfont}{\scriptsize}

\printbibliography

\newpage

\appendix


\section{System Architecture and Trigger Illustration}
\label{app:system}

The model-based scheduler proceeds in three steps. First, it 
whitens the innovation by the known covariance $S$ to produce 
$\zeta_k$ with unit-variance components. Second, it fires 
whenever any component of $\zeta_k$ is large. Third, it sets 
the firing threshold by inverting the Gaussian tail formula: 
given a budget $\bar\Gamma$, Eq.~\eqref{eq:gauss-thresh} gives 
the exact threshold assuming $\zeta_k$ is standard Gaussian. 
All three steps require plant knowledge and Gaussianity.

Figure.~\ref{fig:attack-diagram} illustrates the overall
system architecture for model-free
attack scheduler. The plant generates state $x_k$;
the sensor produces measurement $y_k$. The model-free
attack scheduler observes $y_k$ and sets the trigger
$\gamma_k\in\{0,1\}$: when $\gamma_k=1$ the corrupted
measurement $y_k^c$ \eqref{eq:signflip} is injected into
the network in place of $y_k$. The remote estimator
produces $\hat{x}_k$ from whatever measurement it
receives, and the detector raises an alarm when
$g_k>\eta$. The scheduler has no access to the estimator
internals, the detector threshold, or any plant matrices.

\begin{figure}[H]
\centering
\includegraphics[width=\linewidth]{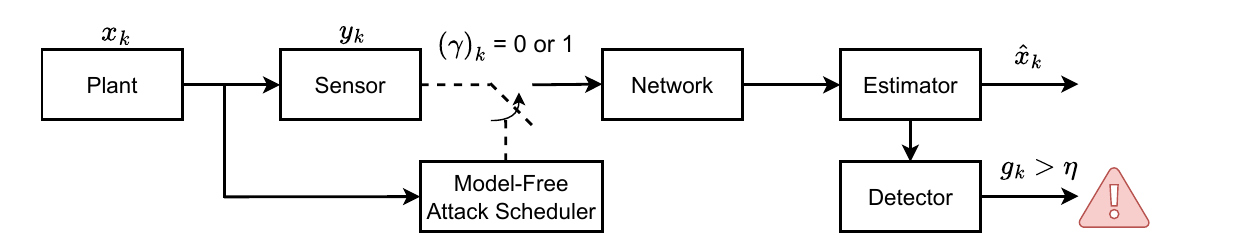}
\caption{System architecture. The model-free attack
scheduler intercepts the sensor measurement $y_k$ and
decides $\gamma_k\in\{0,1\}$ using only the AE residual
score $s_k$. When $\gamma_k=1$ the sign-flipped
measurement $y_k^c$ is transmitted; the estimator and
detector operate on the received signal without knowledge
of the attack.}
\label{fig:attack-diagram}
\end{figure}

The proposed method replaces all three steps with a single 
non-parametric operation, illustrated in Figure.~\ref{fig:cal_deploy}: 
collect $N$ nominal AE scores $\{s_k\}$ (panel~(a), grey 
histogram), set the threshold $\hat\varepsilon_N$ at the 
$(1-\bar\Gamma)$ empirical quantile (dashed line, shaded area 
$= \bar\Gamma$), and fire at every deployment step where 
$s_k > \hat\varepsilon_N$ (panel~(b), orange dots). No $S$, 
no Gaussian tail, no plant matrices. Geometric interpretation of the model-free attack-scheduler is illustrated in figure\ref{fig:AE}.
\begin{figure}
    \centering
        \includegraphics[width=1\linewidth]{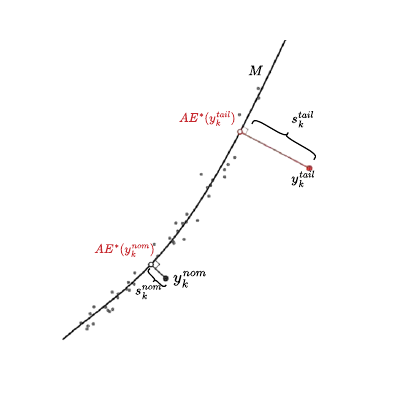}
        \caption{Geometric interpretation of the model-free attack-scheduler.
        Nominal measurements $y_k^{\text{nom}}$ lie near the
        measurement manifold $\mathcal{M}=\operatorname{col}(C)$
        and are reconstructed with small residual by
        $\mathrm{AE}^{\star}$. An attacked measurement
        $y_k^{\text{att}}$ deviates from $\mathcal{M}$, producing
        a large score $s_k=\|y_k-\mathrm{AE}^{\star}(y_k)\|_2$
        that triggers the scheduler.}
        \label{fig:AE}
\end{figure}

\begin{figure}
    \centering
    \includegraphics[width=0.75\linewidth]{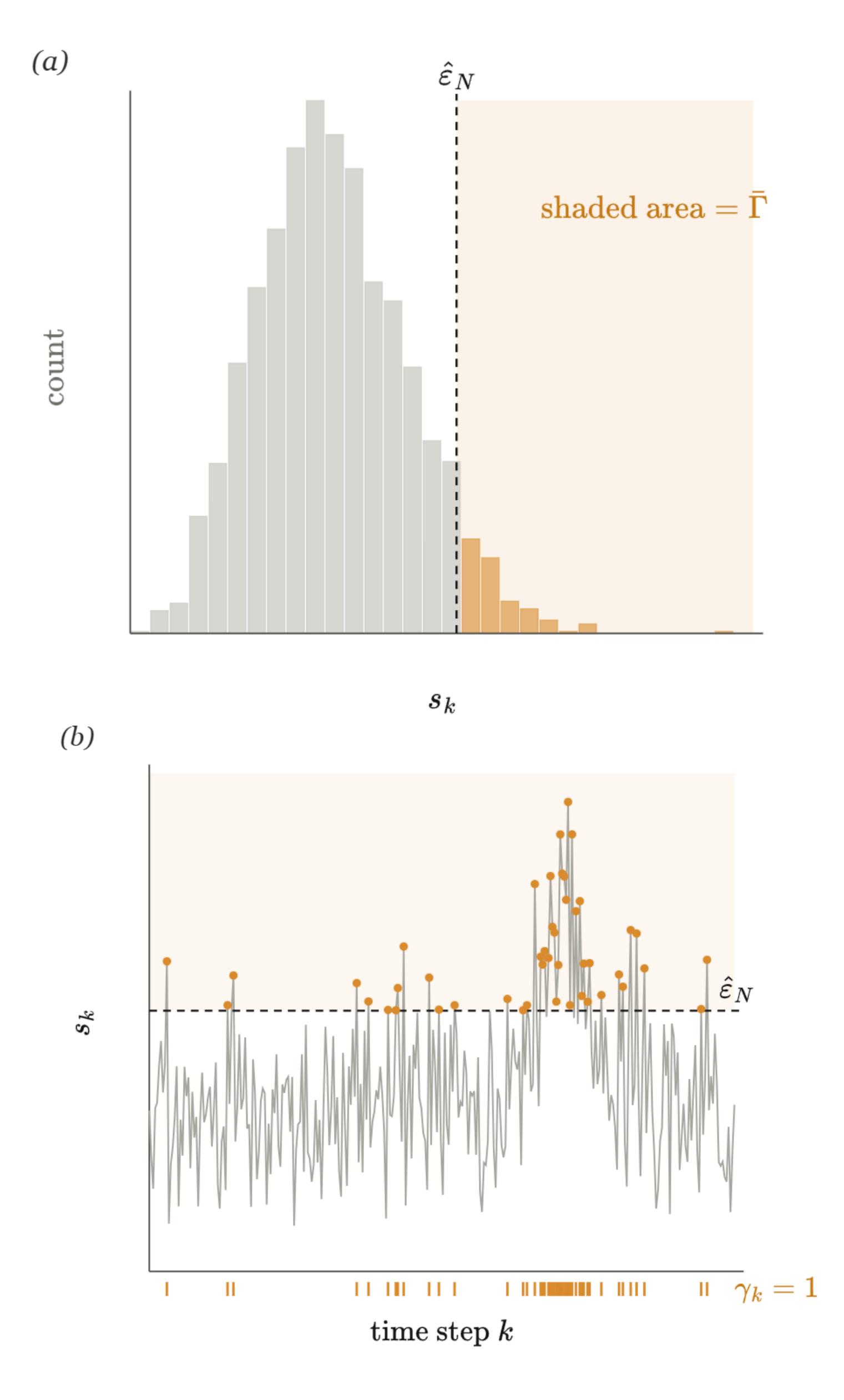}
    \caption{Empirical-quantile trigger calibration and deployment.
    (a) The threshold $\hat\varepsilon_N$ is set as the
    $(1-\bar\Gamma)$-quantile of the score $s_k$ on a held-out
    nominal calibration stream: by construction, the area to its
    right is $\bar\Gamma$. (b) At deployment, the trigger fires on
    the time steps where $s_k>\hat\varepsilon_N$ (orange markers),
    shown beneath the time axis as $\gamma_k=1$.}
    \label{fig:cal_deploy}
\end{figure}

\begin{remark}[Bottleneck dimension and manifold rank]
The choice $q=1$ is not a heuristic but follows from the
system structure: the measurement manifold
$\mathcal{M}=\operatorname{col}(C)$ has dimension equal to
$\operatorname{rank}(C)=1$ for the scalar output channel of
\eqref{eq:num-plant}. Setting $q=\operatorname{rank}(C)$
ensures that $\mathrm{AE}^{\star}$ recovers exactly the
signal subspace and that $s_k$ measures the orthogonal
distance from $y_k$ to $\mathcal{M}$. For multi-channel
systems ($p > 1$), $q$ should be set to
$\operatorname{rank}(C)$, which can be estimated from the
singular value decomposition of the sample output covariance
without plant knowledge.
\end{remark}

Equation~\eqref{eq:emp-cdf} counts, for each candidate threshold 
$t$, what fraction of the $N$ calibration scores fall at or below 
$t$. This gives a staircase function that climbs from $0$ to $1$ 
as $t$ increases. Equation~\eqref{eq:mf-thresh} then asks: at 
what value of $t$ does this staircase first reach 
$1\!-\!\bar\Gamma$? That value is the threshold $\hat\varepsilon_N$. 
Since exactly a fraction $\bar\Gamma$ of calibration scores exceed 
$\hat\varepsilon_N$ by construction, and the deployment scores come 
from the same distribution, the trigger fires at approximately 
$\bar\Gamma$ of deployment steps---without ever assuming Gaussianity 
of the score distribution.

\section{AE Residual Distribution Statistics}
\label{app:ae-stats}

Table~\ref{tab:ae-full} and Figure.~\ref{fig:ae-histograms}
report the autoencoder residual statistics and distribution
plots for all four datasets. Experiment~1 exhibits
near-Gaussian residuals ($\kappa_r\!=\!0.06$, QQ-plot on
the reference line throughout). Experiment~2 and both real
vehicle datasets show pronounced heavy tails
($\kappa_r\!=\!5.3$--$6.8$), with QQ-plot points departing
from the reference line at both extremes. The similar excess
kurtosis across Experiment~2, the CSU truck, and the KIA
Soul confirms that the synthetic mixture is a faithful proxy
for real vehicle sensor noise, justifying its use as a
controlled surrogate where no plant model is available.

\begin{table}
\centering
\caption{AE residual distribution statistics (calibration
stream; real datasets after $5\sigma$ trimming).}
\label{tab:ae-full}
\setlength{\tabcolsep}{4pt}
\begin{tabular}{lrrrrc}
\toprule
Dataset & Mean & Std & Skew & $\kappa_r$ & Heavy-tailed?\\
\midrule
Exp.~1 (Gaussian) & $0.004$ & $0.145$ & $ 0.01$ & $0.06$ & No\\
Exp.~2 (mixture)  & $0.008$ & $0.153$ & $ 0.16$ & $5.81$ & Yes\\
CSU Truck         & $0.003$ & $0.047$ & $-0.33$ & $5.33$ & Yes\\
KIA Soul          & $-0.062$ & $1.605$ & $0.05$ & $6.77$ & Yes\\
\bottomrule
\multicolumn{6}{l}{\footnotesize $\kappa_r$: AE-residual excess
kurtosis. Trimmed at $5\sigma$: CSU $0.65\%$, KIA ${<}1\%$.}
\end{tabular}
\end{table}

\begin{figure}
\centering
\includegraphics[width=\linewidth]{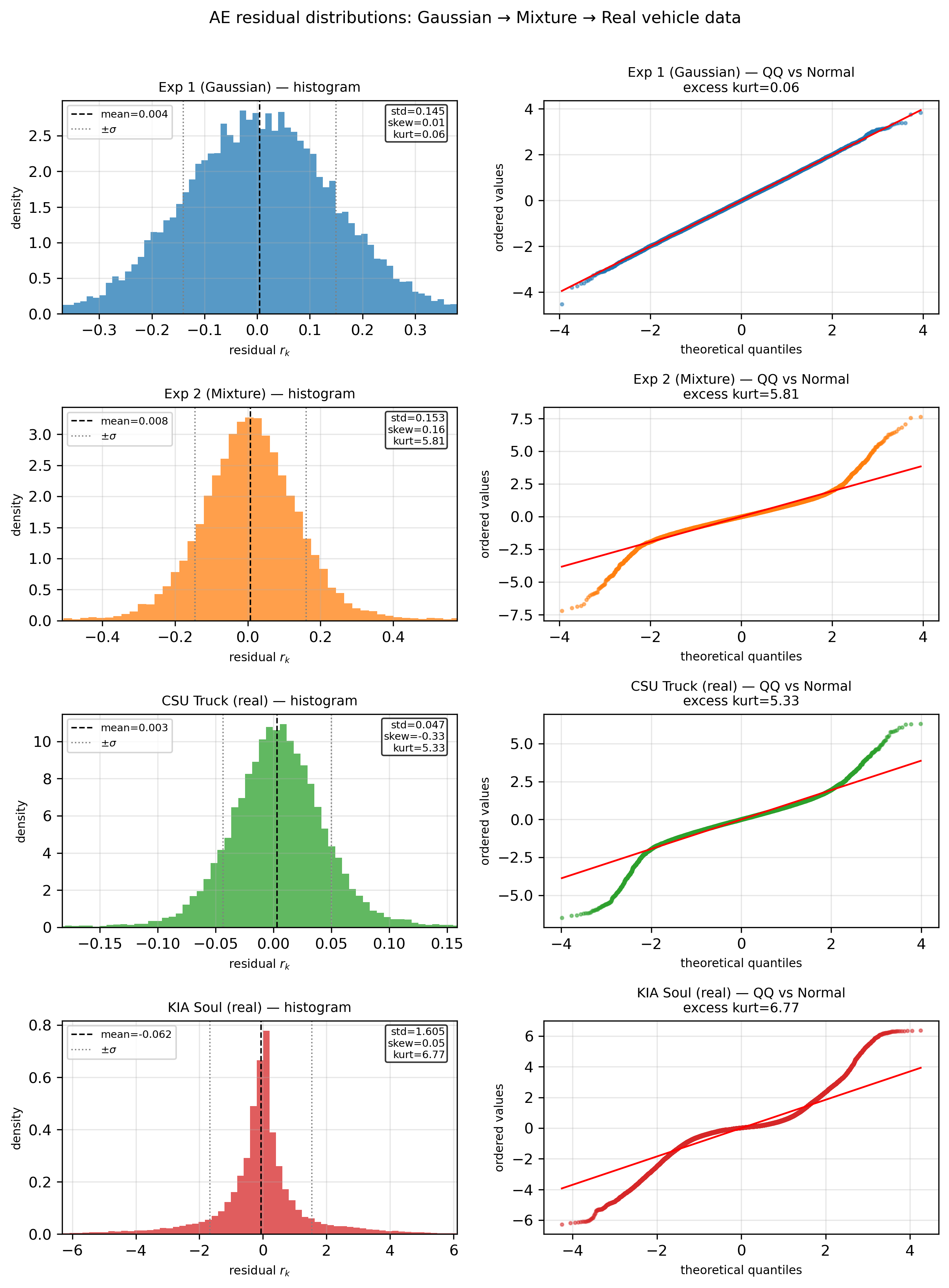}
\caption{AE residual histograms (left) and normal QQ-plots
(right) for all four datasets. Experiment~1 follows the
Gaussian reference closely. Experiment~2, CSU truck, and
KIA Soul all exhibit heavy-tailed departure from normality
with similar excess kurtosis ($5.3$--$6.8$), supporting
Experiment~2 as a synthetic proxy for real vehicle data.}
\label{fig:ae-histograms}
\end{figure}

\section{State Trajectory Deviation}
\label{app:state-dev}

Figure.~\ref{fig:x1-deviation} shows the state-estimate
trajectory $\hat{x}_1$ under the sign-flip attack for
both experiments (one panel per budget). In Experiment~1
the attacked and nominal trajectories diverge modestly and
similarly for both schedulers. In Experiment~2 the
deviation grows with the budget, reflecting the greater
degradation delivered by the proposed scheduler compared
to the under-firing baseline.

\begin{figure}
\centering
\subfloat[Experiment~1 (Gaussian)\label{fig:x1-a}]{%
  \includegraphics[width=\linewidth]{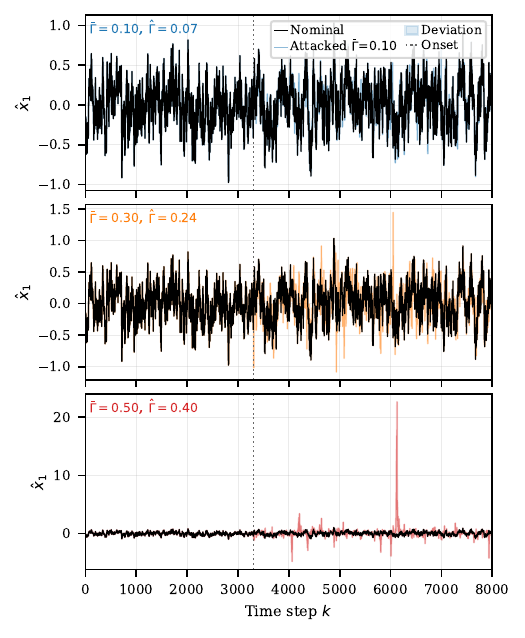}}
\\[3pt]
\subfloat[Experiment~2 (Mixture)\label{fig:x1-b}]{%
  \includegraphics[width=\linewidth]{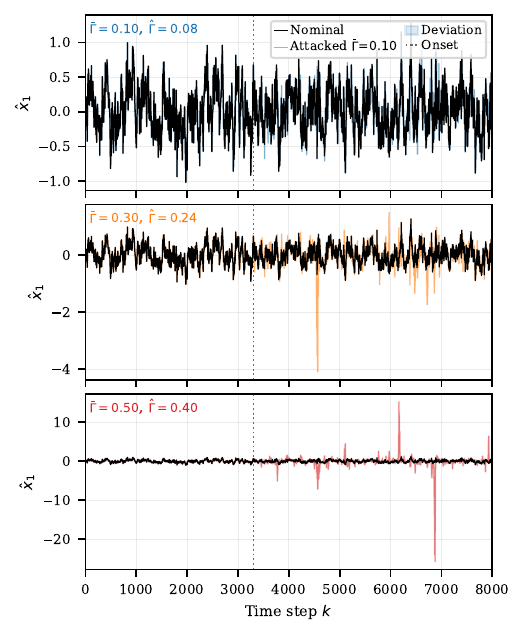}}
\caption{$\hat{x}_1$ under attack (one panel per budget).
Black: nominal. Coloured: attacked. Shaded: deviation from
nominal. Dotted: attack onset. Budget and realized rate
annotated top-left of each panel.}
\label{fig:x1-deviation}
\end{figure}

\section{Windowed $\chi^2$ Detector Time Series}
\label{app:detector}

Figure.~\ref{fig:detector} shows the windowed $\chi^2$
detection index $g_t$ ($N\!=\!20$) under no attack and
under attack at $\bar\Gamma\!=\!0.30$ for both experiments.
The attacked and nominal traces are statistically
indistinguishable throughout the deployment horizon; the
attacked false-alarm rate matches the nominal baseline
($1.08\%$ for Experiment~1, $0.69\%$ for Experiment~2)
to four decimal places at every budget, confirming
stealthiness.

\begin{figure}
\centering
\subfloat[Experiment~1 (Gaussian), $\vartheta=37.3$%
         \label{fig:det-a}]{%
  \includegraphics[width=\linewidth]{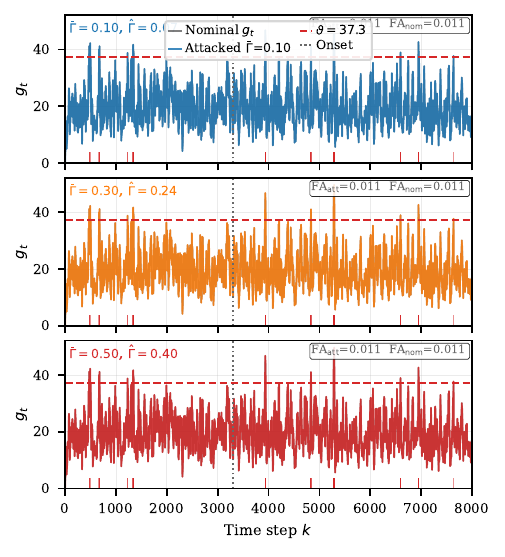}}
\\[2pt]
\subfloat[Experiment~2 (Mixture), $\vartheta=81.8$%
         \label{fig:det-b}]{%
  \includegraphics[width=\linewidth]{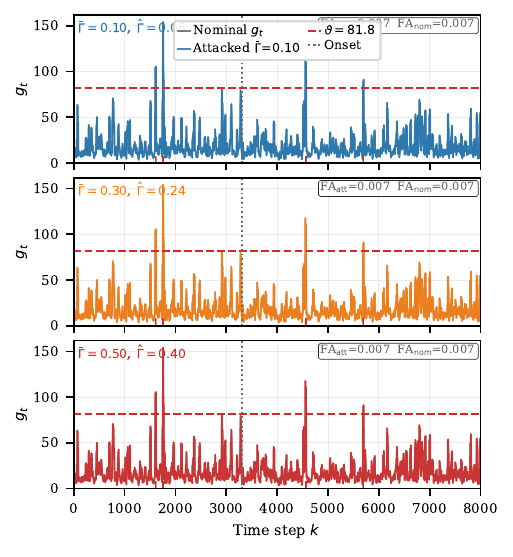}}
\caption{Windowed $\chi^2$ detection index for
each target budget $\bar\Gamma\in\{0.10,0.30,0.50\}$. Red ticks at the
base of each panel mark individual alarms.}
\label{fig:detector}
\end{figure}

\end{document}